\title{Matrix Based Adaptive Short Block Cipher
\thanks{\textit{\underline{Citation}}: 
\textbf{Authors. Title. Pages.... DOI:000000/11111.}} 
}
\author{
  Awnon Bhowmik\\
  Department of Mathematics\\
  University of Central Florida\\
  \texttt{awnon.bhowmik@ucf.edu}
}
\newtheorem{thm}{Theorem}
\theoremstyle{definition}
\begin{document}
\maketitle

\begin{abstract}
Every day, millions of credit cards are swiped and transactions are carried out across the world. Due to numerous forms of unethical digital activities, users are vulnerable to credit card fraud, phishing, identity theft, etc. This paper outlines a novel block encryption algorithm involving multiple private keys and a resilient trapdoor function that ensures data security while maintaining an optimal run time and space complexity. The proposed scheme consists of an irrepressible trapdoor based on a depressed cubic function and a unique key generation algorithm that uses Fibonacci sequences and invertible square matrices for improved security. The paper involves data obtained from comprehensive crypt analysis exploiting the strengths and weaknesses of the system and comments on its potential large-scale industry applications.
\end{abstract}

\keywords{Credit Card Security \and block cipher \and block encryption algorithm}

\section{Introduction}
Over the years, the new cryptosystems have gradually increased in obscurity thus providing more diffusion and confusion. The essence of any cryptosystem relies on some special mathematical trapdoor function that makes it practically impossible for an unwelcome interceptor to gain access to secretive information. This has brought about better security protocols that are being used in everyday digital systems today. The wonderful fact is that with time it has been realized that any mathematical concept can be used to create trapdoors for cryptosystems as long as they are used appropriately. These functions also ensure that the authorized parties (who know the secret key) can continue sharing data among themselves. The rate of data exchange in the present era requires trapdoor functions that are extremely fast and take up the minimum memory resources while operating on data.

Credit card encryption is a security measure used to minimize the likelihood of card information being stolen as it may lead to countless fraudulent practices \cite{whitworth2001fraud}. In the past, banks used the magnetic strip on the back of credit cards to accept purchases. However, these security features became heavily corrupted with malware during the fraud crisis of 2014. Once the information is stolen, it can be put into a new strip and used to make fraudulent purchases. Additionally, magnets can tamper with the magnetic strip on credit cards, erasing the information stored there and rendering them useless \cite{oldenkamp1982method}. During an ordinary credit card transaction, there are mainly 3 steps where the
card is vulnerable:
\begin{enumerate}
    \item The electronic terminal where the card is being scanned.
    \item The transmission of information between that terminal and its bank’s back-end servers.
    \item The card itself, as flaws in internal components of the card, such as electronic chip
\end{enumerate}

\section{Linear Algebra in Cryptography}
Modern cryptography began only in the 1970s, and over the years many fascinating cryptosystems have been developed by involving it with the various mathematical concepts we are aware of, predominantly by Number Theory since this is where we get to learn about prime numbers and cryptography stands on the foundation of prime number distribution. Anyone finding the pattern to the distribution of primes can erase the concept of cryptography
from textbooks and modern times forever. Not far behind number theory; linear algebra concepts are no slouch themselves. Ranging from the connection
between the determinant and invertibility of a matrix to eigenvalues and eigenvectors as components of potential trapdoor functions, linear algebra provides a wide range of tools up for grabs \cite{staffelbach1990cryptographic}. The following talks about how
it is used in cryptography.

\section{Encryption in credit card transactions}
Credit cards constitute a straightforward and fast method of digital payment scheme that allows the transfer of massive proportions of money of various currencies and denominations. This makes credit cards an intrinsic component of the payment process. Businesses generally have electronic terminals where consumers can simply swipe or scan the cards and the appropriate amount of money gets seamlessly transferred. The card has unique identification digits stored on the onboard chip that gets sent to authorized computer servers where the balance amount is re-evaluated after deducting the appropriate amount and verification is carried out. There are a couple of mechanisms used by banks to ensure secure transactions. Some of these include a private PIN number for authentication, CVV numbers for online transactions and even encrypting the magnetic strip at the back of the card. All these methods are effective, however, there is still a need for a more efficient encryption scheme to ensure that phishing, identity theft, and other forms of online money-based fraud are. Credit card encryption makes it impossible for an unethical hacker to access the card credentials without a special private key that ensures seamless transactions between various authorized parties. As it turns out, the proposed algorithm was observed to deliver superior performance in terms of security and run time for shorter string lengths. This attribute makes this system a feasible candidate for credit card encryption that could with multiple applications in different stages of the process.

\subsection{Summary of applications in Linear Algebra}
\begin{enumerate}
    \item A matrix can be used as a cipher to encrypt a message.
    \begin{itemize}
        \item The matrix must be invertible for use in decrypting.
    \end{itemize}
    
    \item Cipher matrix can be as simple as a $3\times 3$ matrix composed of random integers.
    
    \item To encrypt plaintext, each character in the plaintext must be denoted with a numerical value and placed into a matrix.
    
    \begin{itemize}
        \item These numbers can range in value, but an example is using $1-26$ to represent A to Z and $27$ to represent a space.
    \end{itemize}
    
    \item This matrix is then multiplied with the cipher matrix to form a new matrix containing the ciphertext message.
\end{enumerate}

\subsection{Encrypting a message}
\begin{enumerate}
    \item Each character of the message is given a numerical value as mentioned above. These values are then separated into vectors, such that the number of rows of each vector is equivalent to the number of rows of the cipher matrix.
    \item Values are placed into each vector one at a time, going down a row for each value. Once a vector is filled the next vector is created. If the last vector does not get filled by the plaintext then the remaining entries will hold the value for a space.
    \item The vectors are then augmented to form a matrix that contains the plaintext.
    \item The plaintext matrix is then multiplied with the cipher matrix to create the ciphertext matrix
\end{enumerate}

\subsection{Decrypting a message}
\begin{enumerate}
    \item To decrypt a ciphertext matrix the original cipher matrix must be used. The cipher matrix must be inverted to decrypt the ciphertext. This cipher matrix can be used as the private key for decryption.
    \item This inverted cipher matrix is then multiplied with the ciphertext matrix.
    \item The product produces the original plaintext matrix.
    \item The plaintext can be found again by taking this product and splitting it back up into its separate vectors and then converting the numbers back into their corresponding alphabetic form.
\end{enumerate}

\section{Confusion and Diffusion}
In cryptography, confusion and diffusion are two properties of the operation of a secure cipher which were identified by Claude Shannon in his paper Communication Theory of Secrecy Systems, published in 1949 \cite{shannon1949communication}. In Shannon’s original definitions, confusion refers to making the relationship between the key and the ciphertext as complex and involved as possible \cite{stallings2006cryptography}; diffusion refers to the property that the redundancy in the statistics of the plaintext is "dissipated" in the statistics of the ciphertext. In other words, the non-uniformity in the distribution of the individual letters (and pairs of neighboring letters) in the plaintext should be redistributed into the non-uniformity in the distribution of much larger structures of the ciphertext, which is much harder to detect. In diffusion, the statistical structure of the plaintext is dissipated into long-range statistics of the ciphertext. Diffusion means that the output bits should depend on the input bits in a very complex way. In a cipher with good diffusion, if one bit of the plaintext is changed, then the ciphertext should change completely, in an unpredictable
or pseudorandom manner. AES protocol is a perfect example that exhibits both confusion and diffusion.

\section{Trapdoor Function}
The essence of any cryptosystem relies on some special mathematical trapdoor function \cite{yao1982theory} that makes it practically impossible for an unwelcome interceptor to gain access to secretive information. Simultaneously, these functions also ensure that the authorized parties (who know the secret key) can continue sharing data among themselves. A trapdoor function is a mathematical transformation that is easy to compute in one direction, but extremely difficult (practically impossible) to compute in the opposite direction in a feasible time unless some special information is known (private key). Analogously, this can be thought of as the lock and key in modern cryptography where until and unless someone has access to the exact key, they can’t open the lock. In mathematical terms, if $f$ is a trapdoor function, then $y=f(x)$ easy to calculate but $x=f^{-1}(y)$ is tremendously hard to compute without some special knowledge $k$ (called key). In case $k$ is known, it becomes easy to compute the inverse $x=f^{-1}(y,k)$.

\section{Initial Encoding Procedure}
\label{enc}
The encoding procedure depends on a few results which are modified according to fit the needs of this paper.\\~\\
\textbf{Cantor Pairing Function} The Cantor Pairing Function is a primitive recursive pairing function $\pi:\mathbb{N}\times\mathbb{N}\to\mathbb{N}$ defined by \begin{equation}
\label{eq:1}
    \pi(k_1,k_2)=\dfrac{1}{2}(k_1+k_2)(k_1+k_2+1)+k_2
\end{equation}
\begin{thm}
\label{thm:1}
\normalfont
    Product of $n$ consecutive integers is divisible by $n$.
\end{thm}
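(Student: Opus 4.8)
The plan is to argue by a pigeonhole/complete-residue-system observation rather than by induction or factorials, since the statement only asks for divisibility by $n$ (not by $n!$). Write the $n$ consecutive integers as $a, a+1, \dots, a+n-1$ for some $a\in\mathbb{Z}$, and consider their residues modulo $n$.

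First I would show these $n$ residues are pairwise distinct: if $a+i\equiv a+j\pmod{n}$ with $0\le i<j\le n-1$, then $n\mid (j-i)$, which is impossible because $0<j-i<n$. Hence the map sending the $k$-th term to its residue class is an injection from an $n$-element set into $\mathbb{Z}/n\mathbb{Z}$, and therefore a bijection. In particular, the residue $0$ is hit, so one of the integers $a+\ell$ (for some $0\le \ell\le n-1$) satisfies $n\mid (a+\ell)$. Since that integer is one of the factors, the whole product $a(a+1)\cdots(a+n-1)$ is divisible by $n$.

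I do not expect a genuine obstacle here; the only points worth a sentence of care are (i) the argument is stated for arbitrary $a\in\mathbb{Z}$, so it covers negative runs and runs straddling $0$ uniformly, and (ii) if one of the consecutive integers is $0$ the product is $0$, which is divisible by $n$ trivially and is also consistent with the residue argument. If desired, I would close with a remark that the same set of $n$ consecutive integers is in fact divisible by $n!$, via the integrality of the binomial coefficient $\binom{a+n-1}{n}=\dfrac{a(a+1)\cdots(a+n-1)}{n!}$ for $a\ge 1$ (and a sign adjustment otherwise), but this stronger fact is not needed for the encoding procedure, so I would keep the main proof to the short residue argument above.
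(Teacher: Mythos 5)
Your proof is correct and rests on the same idea as the paper's: among $n$ consecutive integers the residues modulo $n$ are pairwise distinct (since any difference is strictly between $0$ and $n$), so one of them is a multiple of $n$ and hence so is the product. The only difference is presentational — you argue directly via a bijection onto $\mathbb{Z}/n\mathbb{Z}$, while the paper phrases the same pigeonhole observation as a proof by contradiction — so this counts as essentially the same approach (and your direct version is, if anything, stated more cleanly than the paper's).
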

\begin{proof}
    By contradiction. Suppose that none of the $n$ consecutive integers is divisible by $n$. There are $n-1$ different remainders upon division by = while $n$ remainders are needed. Therefore, by the pigeonhole principle, two of the divisions have the same remainder. But that would imply that the difference in those dividends is divisible by $n$ and hence is $n$. But the largest difference is $n-1$. Contradiction.
\end{proof}

The algorithm starts off with a list of ASCII values and a list of unique primes. The process is as follows.\\~\\
\textbf{Encode}
\begin{enumerate}
    \item Let $x$ be the ASCII value and $y$ is a prime. Compute \begin{equation}
    \label{eq:encode}
        t=\dfrac{(x+y-1)(x+y)(x+y+1)}{6}
    \end{equation}
    \item Setting $n=x+y$ yields $t=\dfrac{n^3-n}{6}$ and $n^3-n\equiv 0\pmod{6}$ by Theorem \ref{thm:1}.
\end{enumerate}

\textbf{Decode}
\begin{enumerate}
    \item Given $t$, we solve $n^3-n-6t=0$ for $n$.
    \item The prime number $y$ is known, so $x=n-y$ returns the ASCII data.
\end{enumerate}

\section{Remarks on the cubic equation}
A cubic equation\begin{equation}
    ax^3+bx^2+cx+d=0
\end{equation} can be converted to a \emph{depressed} cubic equation via small substitutions to attain the form of \begin{equation}
    x^3+mx+n=0
\end{equation}. For this equation, it is required that $27n^2+4m^3\neq 0$ to avoid singularities. In our case, $n^3-n-6t=0$, we require

\begin{align*}
    27(-6t)^2+4(-1)^3&\neq 0\\
    27\cdot 36t^2-4&\neq 0\\
    243t^2-1&\neq 0\\
    t&\neq\dfrac{\sqrt{3}}{27}
\end{align*}

Since $t\in\mathbb{Z}$, this is impossible.

\section{An example demonstrating the initial encoding procedure}
Suppose the input character is 'A' which has an ASCII value of $65$, and the corresponding prime is $13$. The following calculations take place

\emph{Encode}\\
$$t=\dfrac{1}{6}(65+13-1)(65+13)(65+13+1)=79079$$\\
\emph{Decode}\\
\begin{align*}
    n^3-n-6t&=0\\
    n^3-n-6(79079)&=0\\
    n^3-n-474474&=0
\end{align*}
Let $n=\alpha+\beta$, then
\begin{align*}
    n^3&=(\alpha+\beta)^3\\
    &=\alpha^3+\beta^3+3\alpha\beta(\alpha+\beta)\\
    &=\alpha^3+\beta^3+3\alpha\beta n\\
    n^3-3\alpha\beta n-(\alpha^3+\beta^3)&=0
\end{align*}

Comparing coefficients, we have
\begin{equation*}
    \begin{split}
        3\alpha\beta&=1\\
        \alpha\beta&=\dfrac{1}{3}
    \end{split}
    \qquad
    \begin{split}
        \alpha^3+\beta^3&=474474
    \end{split}
\end{equation*}

Generating a quadratic equation with the roots $\alpha^3$ and $\beta^3$
\begin{align*}
    s^2-(\alpha^3+\beta^3)s+\alpha^3\beta^3&=0\\
    s^2-474474s+\dfrac{1}{27}&=0\\
    s&=\dfrac{474474\pm\sqrt{474474^2-\dfrac{4}{27}}}{2}\\
    s&=(\alpha^3,\beta^3)=474474,7.9\times10^{-8}\\
    n&=\alpha+\beta=\sqrt[3]{474474}+\sqrt[3]{7.9\times 10^{-8}}=78
\end{align*}

Here is a Mathematica notebook demonstration.\\
\begin{lstlisting}[language=Mathematica]
    In: x:=65
        y:=13
        n:=x+y
        t:=(n^3-n)/6
        Solve[r^3-r-6t==0,r,Integers]
    Out:{{r->78}}

    In: 78-y
    Out: 65

    In: ClearAll["Global`*"]
    In: f[x_]:=x^3-x-6t
    In: f'[x]
    Out: -1+3x^2
    In: Reduce[f'[x]>0,x]
    Out:
\end{lstlisting}
$$x<-\frac{1}{\sqrt{3}}\lor x>\frac{1}{\sqrt{3}}$$

This implies that the notebook demonstration also showed that a given function $$f(x)=x^3-x-6t$$ exhibits an increasing behavior on $x\in\left(\dfrac{1}{\sqrt{3}},\infty\right)$. Since $x,y>0$, and $n=x+y$, we have $n>0$, and so $f(n)=n^3-n-6t$ will always be increasing on $\mathbb{Z}^+$.

\section{Proposed algorithm}
\subsection{Encryption}
\begin{enumerate}
    \item Input a message string
    \item Generate an array $A$ of ASCII values and another array $P$ containing unique pseudorandom primes.
    \item Take the ASCII values and their corresponding mapped prime and apply the encoding scheme described in section \ref{enc}.
    \item Put these values into matrix $B$ and split them into blocks of $4$. Apply '$0$' padding to whichever $1\times 4$ array isn't completely filled with entries. Convert each $1\times 4$ array into a $2\times 2$ matrix. Suppose each block matrix is $B_i$.
    \item Generate a $2\times 2$ Fibonacci matrix $Q^n$ \cite{gould1981history}, where $$Q^n=\begin{bmatrix}
        F_{n+1} & F_n\\F_n & F_{n-1}
    \end{bmatrix}$$ and perform $$C=B\cdot Q^n$$

    \item Choose an integer $n$ such that $\theta=\dfrac{n\pi}{2}$ to generate a rotation matrix $R$ such that $$R=\begin{bmatrix}
        \cos\theta & -\sin\theta\\\sin\theta &\cos\theta
    \end{bmatrix}$$ and perform
    \begin{align*}
        D&=C\cdot R\\
        D&=D^T
    \end{align*}
    \item Choose a $2\times 2$ \emph{invertible} secret key $K$ and perform encryption $$E=D\cdot K$$
\end{enumerate}

After the final step, the result will alter drastically depending on the contents of the original string, thus demonstrating \emph{confusion}. On the other hand, the choice of a different key for the same message string will
completely alter the result, thus demonstrating \emph{diffusion}.

\subsection{Decryption}
\begin{enumerate}
    \item Perform $D=E\cdot K^{-1}$
    \item Perform transpose $(D^T)^T=D$
    \item Perform $C=D\cdot R^{-1}$
    \item Perform $B=C\cdot (Q^n)^{-1}$
    \item Apply the decoding scheme described in section \ref{enc} to generate the ASCII data.
    \item Convert each $2\times 2$ block to $1\times 4$ array. Remove padding.
\end{enumerate}

This system can probably be extended to a $n\times n$ block size with numerous tweaks in the proposed algorithm while dealing with exponential runtime complexity.

\section{Properties of determinants}
Determinants play an important part in this cryptosystem so it is essential to recall their properties. For any arbitrary square matrix $A$
\begin{enumerate}
    \item $\det(A)=\det(A^T)$
    \item For any arbitrary constant $c$, $\det(cA)=c^n\det(A)$
\end{enumerate}

\section{Few remarks on R and K}
The $2D$ rotation matrix is given by $$R=\begin{bmatrix}
    \cos\theta & -\sin\theta\\\sin\theta &\cos\theta
\end{bmatrix}$$ It is to be noted that $det(R)\neq 0$. It means the sender can choose any $\theta\in\mathbb{R}$ to encrypt the data. The only drawback is that the sine and cosine ratios can provide only three integral values from $\left\{-1,0,1\right\}$. Any other angles will raise floating point precision errors which will amount to incorrect decryption of the data.

A rotation matrix is a square matrix with real entries, and also an orthogonal matrix with a determinant of $1$. A matrix $R$ is orthogonal iff $$R^TR=RR^T=I$$ which consequently means that $$R^{-1}=R^T$$This can make things a bit simpler while coding up the algorithm.\\~\\
An arbitrarily chosen matrix $K$ should be checked for invertibility since it is used in the decryption procedure. There can be a \textit{special case} where $K$ itself is a rotation matrix. Then $$K^{-1}=K^T$$ can speed up the process.\\
It is not necessary to choose $K$ as a square matrix at all. Suppose we choose a $m\times n$ matrix $M$, so $M^T$ would be of order $n\times m$. The product $M^TM$ will be a square symmetric matrix of order $m\times m$.

If $M$ is chosen as skew-symmetric, then
\begin{align*}
    M&=M^T\qquad\text{By property 1}\\
    \det(M)&=\det(M^T)\\
    \det(M)&=\det(-M)\\
    \det(M)&=(-1)^n\det(M)\\
    \det(M)&=\det(M)\qquad\text{for odd }n\\
    2\det(M)&=0\\
    \det(M)&=0
\end{align*}
Hence, for an odd-order square matrix, $M$ shouldn’t be chosen as skew symmetric to avoid the singularity.

\section{Automated generation of key matrix K}
The user inputting a $2\times 2$ matrix that has to always account for its invertibility by calculating its determinant. A singular matrix has a determinant of $0$ and this must be avoided since $K^{-1}$ is required during decryption. To overcome this hassle, a random $2\times 2$ matrix with integer entries can be checked for non-singularity by checking
its column independence.\\

Given an arbitrary matrix with integer entries
$$K=\begin{bmatrix}
    a & b\\ c & d
\end{bmatrix}$$
For arbitrary scalars $k_1,k_2$, we are required to solve the following system
$$k\begin{bmatrix}
    a\\c
\end{bmatrix}+k_2\begin{bmatrix}
    b\\d
\end{bmatrix}=\begin{bmatrix}
    0\\0
\end{bmatrix}$$
which is equivalent to the linear system $$\begin{cases}k_1a+k_2b&=0\\k_1c+k_2d&=0\end{cases}$$\\
If $k_1=k_2=0$, the matrix $K$ is column independent and thereby non-singular.

\section{Conclusion}
This paper introduced a new encryption scheme that was inspired by the Hill Cipher. A couple of edge cases were disclosed and further study is required to make this system even better. The runtime complexity can be safely assumed to be exponential or that it would reach exponential complexity as the length of the string increases or the other parameters are varied. Currently, this system may not be perfect for use in the industry but it is a system nevertheless. Further study might give insights into other problems or drawbacks of the system and appropriate countermeasures can be taken to make it good enough to be used in the industry. Our work so far has been compiled into a GitHub repository available here \cite{bhowmik_menon_2020}.

\bibliographystyle{unsrt}  
\bibliography{references}  

\begin{thebibliography}{1}

\bibitem{whitworth2001fraud}
Brian Whitworth.
\newblock Fraud resistant credit card using encryption, encrypted cards on
  computing devices, October~25 2001.
\newblock US Patent App. 09/764,369.

\bibitem{oldenkamp1982method}
Ralph~J Oldenkamp.
\newblock Method for tamper-proofing magnetic stripe card reader, March~30
  1982.
\newblock US Patent 4,322,613.

\bibitem{staffelbach1990cryptographic}
Othmar Staffelbach and Willi Meier.
\newblock Cryptographic significance of the carry for ciphers based on integer
  addition.
\newblock In {\em Conference on the Theory and Application of Cryptography},
  pages 602--614. Springer, 1990.

\bibitem{shannon1949communication}
Claude~E Shannon.
\newblock Communication theory of secrecy systems.
\newblock {\em The Bell system technical journal}, 28(4):656--715, 1949.

\bibitem{stallings2006cryptography}
William Stallings.
\newblock {\em Cryptography and network security, 4/E}.
\newblock Pearson Education India, 2006.

\bibitem{yao1982theory}
Andrew~C Yao.
\newblock Theory and application of trapdoor functions.
\newblock In {\em 23rd Annual Symposium on Foundations of Computer Science
  (SFCS 1982)}, pages 80--91. IEEE, 1982.

\bibitem{gould1981history}
Henry~W Gould.
\newblock A history of the fibonacci q-matrix and a higher-dimensional problem.
\newblock {\em Fibonacci Quart}, 19(3):250--257, 1981.

\bibitem{bhowmik_menon_2020}
Awnon Bhowmik and Unnikrishnan Menon.
\newblock
  https://github.com/awnonbhowmik/adaptive-matrix-based-short-block-cipher, Aug
  2020.

\end{thebibliography}

\end{document}